\newcommand{\todoTypeout}[1]{\typeout{TODO: \the\inputlineno: #1}}
\newtheorem{theorem}{Theorem}[section]
\newtheorem{corollary}[theorem]{Corollary}
\newtheorem{claim}[theorem]{Claim}
\newtheorem{definition}[theorem]{Definition}
\newtcolorbox{AlgDescBox}[1]{%
  enhanced,
  breakable,
  colback=gray!10,
  colframe=gray!10,   
  boxrule=0pt,        
  arc=0pt,            
  title=\textbf{#1},
  fonttitle=\bfseries,
  coltitle=black,
  colbacktitle=gray!10, 
  left=1em,
  right=1em,
  top=0.5em,
  bottom=0.5em,
}
\newcommand{\iref}[1]{(\ref{#1})}
\newcommand{\newData}[2]{\newcommand{#1}{\DataSty{#2}\xspace}}
\newcommand{\newFunc}[2]{\newcommand{#1}{\FuncSty{#2}\xspace}}
\newData{\Input}{input}
\newFunc{\Read}{read}
\newFunc{\Write}{write}
\newcommand{\Set}[1]{\left\{{#1}\right\}}
\newcommand{\Tuple}[1]{\left\langle{#1}\right\rangle}
\newcommand{\concept}[1]{\textbf{#1}}
\DeclarePairedDelimiter{\abs}{\lvert}{\rvert}
\newData{\Adopt}{adopt}
\newData{\Commit}{commit}
\newData{\Nonce}{nonce}
\title{Obfuscated Consensus}
\author{
        James Aspnes\thanks{Yale University} \and
        Shlomi Dolev\thanks{Ben-Gurion University of the Negev. Partially supported by the Israeli Science Foundation (Grant No. 465/22), the Rita Altura chair in Computer Science, and Google.} \and
        Amit Hendin\thanks{Ben-Gurion University of the Negev}
    }
\date{}
\begin{document}

\maketitle

\begin{abstract}
    The classic Fischer, Lynch, and Paterson impossibility
    proof~\cite{FischerLP1985,LouiA1987} demonstrates that any
    deterministic protocol for consensus in either a message-passing
    or shared-memory system must violate at least one of termination,
    validity, or agreement in some execution. But it does not provide
    an efficient procedure to find such a bad execution.

    We show that for wait-free shared memory consensus, given a
    protocol in which each process performs at most $s$ steps computed
    with total time complexity at most $t$, there exists an adversary
    algorithm that takes the process's programs as input and computes
    within $O(st)$ time a schedule that violates agreement. We argue
    that this bound is tight assuming the random oracle hypothesis:
    there exists a deterministic \concept{obfuscated consensus
    protocol} that forces the adversary to spend $Ω(st)$ time to find
    a bad execution despite having full access to all information
    available to the protocol.

    This bound is based on a general algorithm that reduces the
    constructing an obfuscated consensus protocol to 
    constructing an \concept{obfuscated threshold function} that
    provably costs $Ω(t)$ time to evaluate on a single input, where
    $t$ is a tunable parameter, and for which an adversary with access
    to the threshold function implementation cannot extract the
    threshold any faster than by doing binary search. We give a
    particular implementation of such an obfuscated threshold function
    that is not very efficient but that is provably secure assuming
    the random oracle hypothesis. Since our obfuscated consensus
    protocol does not depend on the specific details of this
    construction, it may be possible to replace it with one that is
    more efficient or requires weaker cryptographic assumptions, a
    task we leave for future work.
\end{abstract}

\section{Introduction}
Asynchronous consensus is one of the most investigated tasks in distributed computing. Consensus over distributed inputs allows a reduction to centralized computation, abstracting away the inherent coordination difficulties distributed systems cope with. 

Asynchronous consensus is a building block in structuring replicated
state machines and efficient proof of authority/stake Blockchains, e.g., \cite{DBLP:journals/tdsc/DolevGNW24}.

Unfortunately, the celebrated result of Fischer, Lynch, and Paterson
\cite{FischerLP1985} proves that there is no deterministic algorithm
that can yield an asynchronous consensus. Subsequent consensus
protocols have escaped this bound by adjusting the requirements of the
problem.
Paxos \cite{DBLP:journals/tocs/Lamport98} succeeded in ensuring the consistency (all participants decide on the same value) and validity (the decided value appears in at least one of the inputs) of the distributed decision but sacrifices the liveness as an adversarial scheduler can keep the system undecided forever. 
Alternatively, randomization (for example,
\cite{DBLP:conf/podc/Ben-Or83}) allows consensus to be solved at the
cost of replacing a deterministic termination guarantee with a
probabilistic one, where the protocol finishes with probability $1$ in
the limit. In these protocols, the adversary scheduler that could
otherwise prevent agreement following the procedure given by the FLP
proof is unable to predict the future and thus falls off the bad path.

We investigate an alternative approach, where validity and termination
are guaranteed in all executions, and agreement holds for all but a
small number of executions that are computational expensive to
compute, even for an adversary given access to the entire code and
knowledge of the processes in the protocol. This idea is inspired by
the Fiat-Shamir heuristic~\cite{FS} for extracting a zero-knowledge
challenge from a hash of public values assuming a random oracle. In
our approach, the scheduling choices of the adversary in an
asynchronous shared-memory system, as observed by the processes,
become the input to an obfuscated threshold function. Each process
computes its output by computing a single value of this function,
while the adversary can only prevent agreement through the much more
expensive process of finding the threshold input at which the value of
the function changes from $0$ to $1$. This gap can in principle be
exploited to obtain a fully-determinsitic wait-free asynchronous
protocol that solves consensus in most executions assuming the
adversary has limited computational power.

This approach has some limitations. We show that, given any protocol
in which each process performs $s$ steps and computes for $t$ time,
there is an adversary program that outputs an agreement-violating
schedule in $O(st)$ time in the worst case. 

Conversely, we provide a reduction from our obfuscated consensus
problem to constructing an obfuscated threshold function, and give a
particular implementation of such a function as an obfuscated truth
table. Here each bit of the truth table is hidden using a post-quantum time
lock puzzle (see, e.g.,
\cite{rivest1996time,DBLP:conf/asiacrypt/AfsharCHLM23} and the
references therein) for efficiently structuring obfuscated programs
(see, e.g., \cite{DBLP:journals/cacm/JainLS24} and the references
therein). This allows us to tune the base cost of computing the
threshold function for both the processes and the adversary, and to
prevent the adversary from finding the hidden threshold any faster
than by binary search over a range exponential in the step complexity
of the consensus protocol.

While our particular implementation of an obfuscated threshold
function involves a relatively expensive setup phase, it is possible
to imagine an implementation using more sophisticated cryptographic
techniques that would avoid this overhead. We discuss this possibility
further in \Cref{section-conclusion}.

\subsection{Overview} The system settings are described in the next section.
The problem of \concept{obfuscated consensus} is described in
Section~\ref{section-obfuscated-consensus}: this replaces the usual
agreement condition for consensus that requires all processes output
the same value in each execution with a new \concept{obfuscated
agreement} condition that only requires that it is computationally
expensive for the adversary scheduler to find a schedule that violates
agreement.
Section \ref{section-consensus-from-threshold} gives a
reduction from obfuscated consensus to a purely 
cryptographic problem, that of obfuscating a threshold
function. This reduction has parameters allowing both the
computation cost incurred by each process and the ratio of the
adversary's cost to each process's cost to be adjusted independently.
An implementation of an obfuscated threshold function, along with a
proof of security, is given in Section~\ref{section-obfuscation}.
Finally, we discuss some of the limitations of our current
implementation and possibilities for future in
Section~\ref{section-conclusion}.

\section{Model}
\label{section-model}

We use a standard asynchronous shared-memory model, with some small
tweaks to include explicitly the programs used by the processes and
adversary to compute their choices.

The system consists of $n$ asynchronous \concept{processes} $p_1,...,p_n$
that communicate by reading and writing to shared objects, with timing
controlled by an \concept{adversary scheduler}.

The behavior of each process is controlled by a random-access machine
program $P$ that implements a function $Q×V→Q×A$, where $Q$ is the set
of states of the process, $V$ is the set of possible return values of
operations (including a special null value used to indicate the first
step by the process), and $A$ is the next operation to execute.

The adversary scheduler is a program in the same model that takes as
input a step bound $s$, a sequence of programs $\Tuple{P_1,\dots,P_n}$
and initial states $q_1,\dots,q_n$, and outputs a a sequence of
process ids, where each id occurs at most $s$ times, specifying the
order in which the processes take steps. We will refer to this
sequence of process ids as a \concept{schedule}.

Communication is via \concept{shared memory}: A collection of 
\concept{atomic read/write registers} $r_1,\dots,r_m$, where applying
a $\Read$ operation to a register returns its current value and
applying a $\Write$ operations sets the value and returns nothing.

A \concept{configuration} of the system consists of the states of the
processes and registers, together with at most one pending operation
for each process.

Every schedule yields an \concept{execution}, an alternating sequence
of configurations and operations $C_0 α_1, C_1 α_2 C_2 \dots$ in which
each $C_{i+1}$ is the result of applying operation $α_{i+1}$ to
configuration $C_i$. Because the processes are deterministic, the
adversary can simulate the execution corresponding to any particular
schedule in time proportional to the total time used by the processes
to choose their steps in that execution.

We assume that both the processes and the adversary have access to a
\concept{random oracle} $H$, which is an idealized version of the
cryptographic hash function. The random oracle acts as a black box
that can be queried with input $x$ to return output $y$. The box is
consistent; that is, any input $x$ will always output the same $y$.
The box is uniformly random, which means the probability of getting
output $y$ for an input $x$ that was not already queried is uniform.
Though everyone has access to this black box and can query it, its
internal workings are unknown (See Chapter 5 of {\cite{KatzLindell}}).

\section{Obfuscated consensus}
\label{section-obfuscated-consensus}

A consensus protocol is a protocol that is run by a set of $n$
asynchronous processes, each of which starts a binary input value
$v_i \in \{0,1\}$, runs until it reaches a decision value $d_i
\in\{0,1\}$, then halts. A consensus protocol is correct if it
satisfies the following requirements:
\begin{description}
\item[Agreement] A consensus protocol satisfies agreement if no two
    processes decide on different values: $\forall i,j: d_i=d_j$.
\item[Validity] A consensus protocol is valid if all processes decide
on a value that was the input of some process, $\forall i \exists j: d_i = v_j$. 
\item[Termination] Every non-faulty process decides after a finite number of steps.
\end{description}

We think of the consensus problem as an adversarial game between the
scheduler and the processes. The processes try to reach agreement on the
same value by reading and writing to shared objects, and the scheduler
tries to prevent them from doing so by strategically ordering these operations.

The Fischer-Lynch-Paterson (FLP) impossibility
result~\cite{FischerLP1985}, as extended to shared memory by Loui and
Abu-Amara~\cite{LouiA1987}, shows that in an asynchronous system with
$n≥2$ processes in which at least one process can fail by crashing,
there is always an adversary
strategy that finds an execution in which at least one of the three
conditions is violated. We will be interested in constructing protocols
where finding this strategy is computationally difficult.

Define an \concept{obfuscated consensus
prototocol} to be a protocol that satisfies termination and validity
always, but replaces agreement with a condition that we can describe
informally as
\begin{description}
    \item[Obfuscated agreement] The cost to the adversary to compute
        a schedule that violates agreement substantially exceeds the
        cost to the processes to execute the protocol.
\end{description}

More formally, the adversary must solve an \concept{obfuscated
consensus problem}: Given process programs $P_1,\dots,P_n$ and a step
bound $s$, find a schedule in which each process takes $s$ steps that
violates one of termination, validity, or agreement. A successful
obfuscated consensus protocol is one for which solving this problem is
expensive, for some reasonable definition of expensive. We will start
by bounding this cost from above.

\section{Bounding the cost of finding a bad execution}

The core idea of the FLP impossibility proof~\cite{FischerLP1985} is
the notion of \concept{bivalence}.
A configuration $C$ is bivalent if there is a path from it to a
configuration where a process decides on the value $1$, and there
is also a path from it to another configuration where a process decides on the value $0$.
Similarly, a configuration $C$ is univalent if all the paths from $C$ lead to configurations with an identical decision value.
Because of the agreement property, any
configuration in which some process has decided is necessarily
univalent.

The FLP proof, in both its original form~\cite{FischerLP1985}, and its
extension to shared memory by Loui and Abu-Amara \cite{LouiA1987},
gives an explicit procedure for constructing an
infinite non-terminating execution for any protocol that satisfies
agreement and validity, by starting in a bivalent configuration and
always choosing bivalent successor configurations.
However, the computational cost of this construction is not considered.
We would like to find an adversary strategy that allows it to prevent
a protocol from terminating without having to explore the space of all executions.

The time complexity of computing a non-terminating, and thus infinite,
execution is necessarily infinite. We will instead look at the more
tractable problem of computing an extension for some fixed number of
steps $s$, which is the justification for defining obfuscated
consensus as satisfying termination and validity always but possibly
failing agreement.
As described in Section~\ref{section-model}, 
we replace an abstract adversary function with
a concrete adversary program that takes as input the starting configuration of
the system and the programs used by the processes to choose their next
operations and, ultimately, their outputs.

Formally, let us define the following problem. We assume that each
program $P_i$ has two special return actions $r_0$ and $r_1$, that it
must choose between after exactly $s$ steps; these provide the output
of each process in the consensus protocol. The adversary, a program
that takes as input the programs $P_1,\dots,P_n$ and the step bound
$s$ and constructs a schedule in which each process takes exactly $s$
steps, wins if any of the processes' outputs disagree.

The following theorem shows that the cost to the adversary to find a
bad execution, relative to the worst-case cost of each process to
carry out some execution, scales linearly with the number of steps
performed by each process.

\begin{theorem}
    \label{theorem-adversary-upper-bound}
    Given inputs
    $P_1,\dots,P_n$ representing processes in a shared-memory system 
    where, in all executions, (a) each process $p_i$ outputs a decision value in
    $s$ steps, and (b) each process $p_i$ uses a total of at most $t$ time to
    compute its transitions, there is an adversary program that
    computes in $O(st)$ time a schedule that causes some pair of processes to output different decision
    values.
\end{theorem}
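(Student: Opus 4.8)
The plan is to produce a computationally bounded version of the classical bivalence argument --- that of Fischer, Lynch, and Paterson and its shared-memory adaptation by Loui and Abu-Amara --- with two modifications: we carry out the whole construction with only two processes, and we replace the prohibitively expensive predicate ``configuration $C$ is bivalent'' by one the adversary can evaluate in $O(t)$ time. Call a reachable configuration $C$ \emph{$(a,b)$-split} if running process $a$ alone from $C$ until it decides outputs $0$, while running process $b$ alone from $C$ until it decides outputs $1$. By hypothesis~(a) each such solo run halts within $s$ steps of that process, and by hypothesis~(b) the adversary can simulate it within $t$ time, so the adversary can test splitness with two simulations, i.e.\ in $O(t)$ time. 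Moreover, since the adversary holds $P_1,\dots,P_n$ and the current configuration, it may read off for free which operation --- a read or write of a named shared object, or a decision --- any process is poised to perform.

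First I would fix the initial configuration. Take $a=p_1$ with input $0$, $b=p_2$ with input $1$, and treat $p_3,\dots,p_n$ as crashed (they take no steps). Let $C_0$ be this configuration. Because inputs live in the processes' private states, $a$'s solo run from $C_0$ never observes $b$'s input and hence coincides with $a$'s solo run from the all-$0$ initial configuration, which outputs $0$ by validity; symmetrically $b$'s solo run from $C_0$ outputs $1$. Thus $C_0$ is $(a,b)$-split.

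Next I would iterate the following, maintaining an $(a,b)$-split configuration $C$. If $a$ is poised to take its deciding step, let it decide $0$ and then run $b$ alone, which outputs $1$ by splitness; output this disagreeing execution and stop (symmetrically if $b$ is poised to decide). Otherwise try to advance: (i) if letting $b$ take a step leaves a configuration from which $a$ still decides $0$ alone, move there; (ii) else if letting $a$ take a step leaves a configuration from which $b$ still decides $1$ alone, move there; (iii) else both single steps flip the opponent's solo outcome, and we inspect the two poised operations. If they commute (distinct objects, or both reads), then performing one step of each reaches a common configuration $D$, and unwinding the solo-run definitions shows $D$ is split with $a$ and $b$ exchanged, so move to $D$ (advancing both). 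The only remaining possibility is a conflict on a single object with at least one write, and each such pattern contradicts the case~(iii) hypotheses by an ``invisible step'' argument: if $a$ is poised to read, its step changes only $a$'s private state, so $b$'s solo outcome cannot flip; if both are poised to write the same object, letting $a$ write and then $b$ write leaves that object holding $b$'s value, so the result differs from $b(C)$ only in $a$'s irrelevant private state and $b$'s solo outcome there must be $1$, not $0$. Hence that possibility never occurs, and every pass --- costing $O(1)$ solo simulations and $O(1)$ program reads, i.e.\ $O(t)$ time --- either stops with a disagreement or advances $a$ or $b$ (or both) by one step.

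Finally I would bound the loop: while $C$ remains split, neither $a$ nor $b$ has reached its deciding step, so each has taken fewer than $s$ steps, and since every pass that does not stop advances one of them, the loop terminates with a disagreement within $O(s)$ passes, emitting a schedule over $\{p_1,p_2\}$ of length $O(s)$; the total time is $O(st)$. The hard part will be case~(iii): carefully verifying, purely in terms of solo runs rather than abstract valence, that the conflicting read/write patterns are excluded by the split and flip conditions and that the commuting sub-case really hands back a split configuration; a secondary check is that ``$(a,b)$-split'' is a faithful cheap surrogate for bivalence, which uses hypothesis~(a) to guarantee that all the solo runs involved terminate.
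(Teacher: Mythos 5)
Your plan is correct and is essentially the paper's own argument: your ``$(a,b)$-split'' predicate is exactly the paper's solo-bivalence via preferences $\pref_p(C)$, your initial split configuration is Lemma~\ref{lemma-bivalent-initial-configuration}, and your case analysis (reads are invisible to the opponent's solo run, same-register writes are overwritten, hence the residual case is commuting writes yielding a split configuration with roles exchanged) reproduces Lemmas~\ref{lemma-persistence-of-preference}--\ref{lemma-preserve-bivalence} and the same $O(s)$-iteration, $O(t)$-per-iteration accounting giving $O(st)$.
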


To prove the theorem, we consider a restricted version of the FLP bivalence 
argument in which valence is
defined only in terms of solo extensions of the current configuration.
This will significantly reduce the number of extensions the adversary needs to
consider.

\newFunc{\pref}{pref}
\newFunc{\decision}{decision}

For each configuration $C$, define the \concept{preference}
$\pref_p(C)$ of $p$ in $C$ to be the output value of $p$ in the execution
$Cα$ where only $p$ takes steps before it decides. We will call such
an extension $α$ $p$'s solo-terminating extension and write
$\pref_p(C) = \decision_p(Cα)$ to indicate that $p$ decides this value
in configuration $Cα$.
We know that a unique such extension $α$ exists
because $p$ is deterministic (making $α$ unique) and
the system is wait-free (so that no fairness requirement prevents
$p$ from running alone).

Call a configuration $C$
\concept{solo-bivalent} if there are processes $p$ and $q$ such that
$\pref_p(C) ≠ \pref_q(C)$.
Call a configuration \concept{solo-$b$-valent} if $\pref_p(C) = b$
for all $p$.
Similar to the original FLP construction, our goal is to
start in a solo-bivalent configuration and stay in a solo-bivalent
configuration.

We start with some simple observations about how a process's
preference can change. Given executions $α$ and $β$, write $α \sim_p
β$ ($α$ is \concept{indistinguishable} by $p$ from $β$) if $α|p =
β|p$, meaning that $p$ observes the same events in both executions.
Note that $α \sim_p β$ in particular implies $\decision_p(α) =
\decision_p(β)$.

\begin{restatable}{lemma}{persistenceOfPreference}
    \label{lemma-persistence-of-preference}
    Let $C$ be a configuration and let $π$ be an operation of $p$.
    Then $\pref_p(Cπ) = \pref_p(C)$.
\end{restatable}
\begin{proof}
    Let $α$ be $p$'s solo-terminating extension of $Cπ$.
    Then $πα$ is a $p$'s solo-terminating extension of $C$, 
    and $\pref_p(C) = \decision_p(Cπα) =
    \pref_p(Cπ)$.
\end{proof}

\begin{restatable}{lemma}{preferenceReadProof}
    \label{lemma-preference-read}
    Let $π$ be a read operation by $p$. Then $\pref_q(Cπ) =
    \pref_q(C)$.
\end{restatable}
\begin{proof}
    Let $α$ be $q$'s solo-terminating extension of $C$. Then $Cπα \sim_q Cα$
    implies $\pref_q(Cπ) = \decision_q(Cπα) = \decision_q(Cα) =
    \pref_q(C)$.
\end{proof}

\begin{restatable}{lemma}{preferenceWriteSameLocation}
    \label{lemma-preference-write-same-location}
    Let $π_p$ and $π_q$ be writes by $p$ and $q$ to the same register.
    Then $\pref_p(Cπ_qπ_p) = \pref_p(C)$.
\end{restatable}
\begin{proof}
    Since the value written by $π_q$ is replaced by $π_p$, we have
    $Cπ_q π_p \sim_p Cπ_p$ and thus
    $\pref_p(Cπ_q π_p) = \pref_p(C π_p) = \pref_p(C)$.
\end{proof}

\begin{restatable}{lemma}{bivalentInitialConfiguration}
    \label{lemma-bivalent-initial-configuration}
    For any wait-free shared-memory 
    protocol satisfying termination and validity, there exists
    an initial solo-bivalent configuration for any $n≥2$.
\end{restatable}
\begin{proof}
    Take any configuration $C$ where two processes $p$ and $q$ have
    different inputs. If either process runs alone, it observes only
    its own input and is forced to decide it by validity.
\end{proof}

\begin{restatable}{lemma}{preserveBivalence}
    \label{lemma-preserve-bivalence}
    Let $C$ be a solo-bivalent configuration
    where $p$ and $q$ are processes with $\pref_p(C) ≠ \pref_q(C)$.
    Let $π_p$ and $π_q$ be the pending operators of $p$ and $q$ in
    $C$. Then, at least one of the following holds:
    \begin{enumerate}
        \item $Cπ_p$ is solo-bivalent with $\pref_p(Cπ_p) =
            \pref_p(C)$ and $\pref_q(Cπ_p) = \pref_q(C)$.
            \label{solo-bivalent-case-p}
        \item $Cπ_q$ is solo-bivalent with $\pref_p(Cπ_q) =
            \pref_p(C)$ and $\pref_q(Cπ_q) = \pref_q(C)$.
            \label{solo-bivalent-case-q}
        \item $Cπ_p π_q$ is solo-bivalent with $\pref_p(Cπ_p π_q) ≠
            \pref_p(C)$ and $\pref_q(Cπ_p π_q) ≠ \pref_q(C)$.
            \label{solo-bivalent-case-pq}
    \end{enumerate}
\end{restatable}
\begin{proof}
    By Lemma~\ref{lemma-preference-read}, $\pref_p(Cπ_p) = \pref(C)$, so
    to avoid case \iref{solo-bivalent-case-pq}, we must have
    $\pref_q(Cπ_p) ≠ \pref_q(C)$.

    Similarly, if case \iref{solo-bivalent-case-q} does not hold, we
    must have $\pref_p(Cπ_q) ≠ \pref_q(C)$.

    If neither of these cases holds, then we have both
    \begin{align*}
        \pref_p(Cπ_q) &≠ \pref_p(C)
        \intertext{and}
        \pref_q(Cπ_p) &≠ \pref_q(C).
    \end{align*}
    From Lemma~\ref{lemma-preference-read}, neither $π_p$ nor $π_q$ is
    a read. From Lemma~\ref{lemma-preference-write-same-location},
    $π_p$ and $π_q$ cannot be writes to the same location. So $π_p$
    and $π_q$ are writes to different locations. But then 
    $C π_p π_q$ and $C π_q π_p$ yield the same configuration. It follows
    that
    \begin{align*}
        \pref_p(C π_p π_q) &= \pref_p(C π_q π_p ) = \pref_p(C π_q) ≠
        \pref_p(C)
        \intertext{while}
        \pref_q(C π_p π_q ) &= \pref_q(C π_p) ≠ \pref_q(C).
    \end{align*}
    So case \iref{solo-bivalent-case-pq} holds.
\end{proof}

One difference between solo-bivalence and bivalence that is
illustrated by the third case of Lemma~\ref{lemma-preserve-bivalence}
is that it is not necessarily the case that a configuration with a solo-bivalent successor is itself solo-bivalent. But we are happy as
long as we can find solo-bivalent extensions of solo-bivalent
configurations, even if this means passing through intermediate
configurations that may not be solo-bivalent.

To turn Lemma~\ref{lemma-preserve-bivalence} into an algorithm, we
just need to be able to test which of its three cases holds. To
compute a bad schedule, the adversary carries out the following steps:
\begin{enumerate}
    \item Pick two processes $p$ and $q$, and start in a configuration $C$ where $\pref_p(C) = 0$ and
        $\pref_q(C) = 1$, as in Lemma~\ref{lemma-bivalent-initial-configuration}.
    \item Given $C$ with $\pref_p(C) ≠ \pref_q(C)$:
        \begin{enumerate}
            \item If $\pref_q(Cπ_p) = \pref_q(C)$, append $π_p$ to the
                schedule and set $C←Cπ_p$.
            \item If $\pref_p(Cπ_q) = \pref_p(C)$, append $π_q$ to the
                schedule and set $C←Cπ_q$.
            \item Otherwise, the third case of
                Lemma~\ref{lemma-preserve-bivalence} holds.
                Append $π_pπ_q$ to the schedule and set $C←Cπ_pπ_q$.
        \end{enumerate}
    \item Repeat until one of $p$ and $q$ decides; then run the other
        to completion, appending its operations to the schedule.
\end{enumerate}

The cost of computing $\pref_p(C)$ for any configuration
$C$ is $O(t)$, since we can just simulate $p$ until it decides. To
avoid copying, we do this simulation in place, logging any changes to
$p$'s state or the state of the shared memory so that we can undo them
in $O(t)$ time. For computing $\pref_p(Cπ_q)$, we again
pay at most $O(t)$ time, since applying $π_q$ to $C$ takes no more that
$O(t)$ time, as does running $p$ to completion thereafter. The same
bounds hold with the roles of $p$ and $q$ reversed.

It follows that each iteration of the main body of the algorithm
finishes in $O(t)$ time. Since each of $p$ and $q$ do at most $s$
steps before deciding, there are at most $2s = O(s)$ iterations,
giving a total cost of $O(st)$, even taking into account the $O(t)$
of the last phase of the algorithm. Since every iteration of the
algorithm yields a solo-bivalent configuration, the preferences of $p$
and $q$ are never equal, and so the constructed schedule
violates agreement.
This concludes the proof of
Theorem~\ref{theorem-adversary-upper-bound}.

\section{Obfuscated consensus from obfuscated threshold}
\label{section-consensus-from-threshold}

The ratio between the $O(t)$ time complexity of each process in
Theorem~\ref{theorem-adversary-upper-bound} and the $O(st)$ time
complexity of the adversary suggests that a useful strategy for the
processes faced with a computationally-limited adversary is to exhaust
the adversary's resources by forcing it to simulate a large number of
expensive potential executions. We can do this by making the outcome
of each process's execution depend on evaluating an expensive function of the
view that process has in the execution.

An \concept{obfuscated threshold function} $f:\Set{0,\dots,\ell} →
\Set{0,1}$ is a function, represented as a random-access machine
program, with $f(0) = 0$ and $f(\ell) = 1$. The \concept{obfuscated
threshold problem} is to find, given a representation of such a function,
an input $v$ such that $f(v) ≠ f(v+1)$. We will show that finding a
bad execution of a particular length of any obfuscated consensus
protocol is equivalent to solving this problem.  

Binary search will find a $v$ with $f(v)≠f(v+1)$ in $O(\log \ell)$
evaluations of $f$, which puts an upper bound on the cost of solving
the obfuscated threshold function problem. But it may be that a
sufficiently clever algorithm can extract $v$ from $f$ without
evaluating $f$ explicitly. We will discuss how to build an $f$ that
resists such attacks in
Section~\ref{section-obfuscation-implementation}.

Note that the end conditions $f(0) = 0$ and $f(1)=1$ show that at
least one transitional value $v$ always exists, and, for some
choices of $f$, many such transitional values exist. This makes the
name ``obfuscated threshold function'' a bit misleading, since $f$ is
not necessarily a threshold function. But choosing a threshold
function in particular reduces the likelihood of guessing a bad $v$ at
random, so we will try to produce threshold functions if we can.

We will show a reduction between these problems
by constructing an obfuscated consensus protocol that
uses an obfuscated threshold function to compute its output values. In
Algorithm~\ref{alg-consensus-from-threshold}, we give a protocol that
assumes that every process (and thus also the adversary) has access to
an obfuscated threshold function $f$, and show that the adversary
computes a bad execution for this protocol if and only if it can find
a $v$ such that $f(v) ≠ f(v+1)$.

\newFunc{\ObfuscatedConsensus}{obfuscatedConsensus}

\begin{algorithm}
    \SharedData{
        array $A[1\dots s][2]$ of atomic registers, initialized to $⊥$\;
    }
    \Procedure{$\FuncSty{approximateAgreement}(\Input,s)$}{
        $i ← \Input$\;
        \For{$r ← 1\dots s$}{
            \tcp{write my current position}
            $A[r][i \bmod 2] ← i$\;
            \tcp{read opposite position}
            $i' ← A[r][(i+1) \bmod 2]$\;
            \eIf{$i'=⊥$}{
                \tcp{no opposition, stay put}
                $i ← 2⋅i$\;
            }{
                \tcp{adopt midpoint}
                $i ← i + i'$\;
            }
        }
        \Return $i$\;
    }
    \Procedure{$\ObfuscatedConsensus(\Input,f,s)$}{
        $i←\FuncSty{approximateAgreement}(\Input,s)$\;
        \Return $f(i)$\;
    }
    \caption{Obfuscated consensus based on an obfuscated threshold}
    \label{alg-consensus-from-threshold}
\end{algorithm}

Pseudocode for a protocol that does this is given in
Algorithm~\ref{alg-consensus-from-threshold}. The main procedure
$\ObfuscatedConsensus$ takes as arguments a consensus input value
$\Input∈\Set{0,1}$, a representation of a function
$f$ that converts integer values to Boolean decisions,
and a security parameter $s$ that scales the number of
shared-memory steps taken by the process. 

The protocol proceeds in two phases.

First, the processes carry out
a $s$-round approximate agreement protocol that assigns each process
$p$ a value $v_p$ in
the range $0 \dots 2^s$, with the properties that
\begin{enumerate}
    \item Any process that sees only input $v$ obtains value $2^s⋅v$.
    \item For any two processes $p$ and $q$, $\abs*{v_p - v_q} ≤ 1$.
\end{enumerate}

The approximate agreement protocol is an adaptation of Moran's
one-dimensional midpoint protocol~\cite{Moran1995}.
In Moran's original protocol, each process takes a sequence of
snapshots of current values and adopts the midpoint of the values it sees until all fit within a particular range. In our protocol, we
organize this sequence into a layered execution of $s$ rounds, where
in each round, a process adopts the average value it sees, scaled by a
factor of $2$ per round to track the integer numerators instead of the
actual fractional values. By taking
advantage of the inputs starting at $0$ and $1$, we can show that
at most two values, differing by at most $1$, appear at the end of
each round; this allows us to replace the snapshot with a pair of
registers, one of which holds the even value and one the odd.
Formally:
\begin{restatable}{lemma}{approximateAgreement}
    \label{lemma-approximate-agreement}
    Let $S_r$ be the set of all $i$ values that are held by any process after $r$
    iterations of the loop in procedure $\FuncSty{approximateAgreement}$.
    Then $S_r ⊆ \Set{v_r, v_r+1}$ for some $0≤v_r≤2^r-1$.
\end{restatable}
\begin{proof}
    By induction on $r$. For $r=0$, we have $v_0=0$ and $S_0 ⊆ \Set{0,1}$.

    For $r+1$, from the induction hypothesis, there is a value $v_r$
    such that $S_r ⊆ \Set{v_r,v_r+1}$. So, each process writes either
    $v_r$ or $v_{r+1}$ to one of the registers $A[r][-]$ before
    reading the other register; since $v_r$ and $v_r+1$ have
    different values mod $2$, they will not overwrite each other.

    It follows that whichever value $v_r$ or $v_{r+1}$ is
    written first will be visible to all processes. If this is
    $v_r$, then every process either sees $v_r$ alone and chooses a
    new value $2⋅v_r$, or sees $v_r$ and
    $v_{r+1}$ and chooses $v_r + (v_r+1) = 2⋅v_r + 1$. In this case,
    we get $S_{r+1} ⊆ \Set{v_{r+1}, v_{r+1} + 1}$ where $v_{r+1} =
    2⋅v_r$. The case where $v_r+1$ is written first is similar, except
    now the processes all choose either $v_{r+1} = 2⋅v_r + 1$ or
    $v_{r+1}+1 = 2⋅(v_r+1)$.
\end{proof}

In particular, the set of output values is given by $S_s ⊆
\Set{v_s,v_s+1}$ where $0≤v_s≤2^s-1$.

To obtain its decision value, each process $p$ then feeds its output
$v_p$ to a function $f(v_p)$. To ensure validity, we
require that $f(0) = 0$ and $f(2^s) = 1$. Agreement is
obtained if $f(v_p) = f(v_q)$ for all processes $p$ and $q$.

If $f$ requires $T(f)$ time to evaluate, the time complexity incurred
by each process running Algorithm~\ref{alg-consensus-from-threshold}
is $O(s)+T$. From Theorem~\ref{theorem-adversary-upper-bound} this
implies that there is an adversary strategy that computes a bad
execution in time $O(s^2 + sT)$; this also tells us that at least one
bad execution exists. We'd like to show a comparable lower bound on
the cost to the adversary, assuming it is difficult to find a
threshold in $f$.

Since every process has a value in $\Set{v_s,v_s+1}$, agreement holds
automatically if $f(v_s) = f(v_s+1)$. So an adversary that finds an
execution that produces disagreement also finds a value $v_s$ with
$f(v_s) ≠ f(v_s+1)$.

For any fixed $f$ with $f(0) = 0$ and $f(2^s) = 1$, there is a
random-access machine that outputs a threshold value $v$ with $f(v) ≠
f(v+1)$ in $O(1)$ time, since it can just include $v$ in its code. We
will need to avoid this by considering the average-case cost to find a
threshold value for an $f$ drawn from some family of threshold
functions. Applying Algorithm~\ref{alg-consensus-from-threshold} to
the functions in this family then give a family of consensus
protocols.

\begin{restatable}{theorem}{consensusFromThreshold}
    \label{theorem-consensus-from-threshold}
    Let $F$ be a family of threshold functions with range
    $\Set{0,\dots,\ell}$. Fix some some $n$ and some $s$ 
    such that $2^s ≥ \ell$, and for each
    $f∈F$, let $P_f$ be the obfuscated consensus protocol given by
    instantiating the $s$-round version of
    Algorithm~\ref{alg-consensus-from-threshold} with $f$ as its
    decision function.

    Suppose we sample $P_f$ uniformly at random and consider
    a starting configuration in which not all processes have the same input.
    If there is a
    probabilistic random-access machine $M$ that takes $P_f$ and the
    process's initial states as input,
    runs for $T$ expected time, and outputs a schedule that produces
    disagreement, then there is a
    probabilistic random access machine $M'$ that takes $f$ as input,
    runs for $T+O(ns)$ expected time, and outputs a $v$ with $f(v) ≠
    f(v)+1$.
\end{restatable}
\begin{proof}
    We give a construction for $M'$.

    Run $M$ on $P_f$ to obtain a schedule for
    Algorithm~\ref{alg-consensus-from-threshold}, using $T$ time. Simulate the
    algorithm until each process $p$ returns a value $i_p$ from
    $\FuncSty{approximateAgreement}$, but do not have any process
    call $f$. this takes $O(ns)$ time since each simulated step can be
    done in constant time in the RAM model. Output the smallest $i_p$
    value as $v$.

    The cost of this procedure is bounded by $T+O(ns)$. From
    Lemma~\ref{lemma-approximate-agreement} we know that all processes
    obtain either $v$ or $v+1$, and in the case that $M$ outputs a
    schedule that produces disagreement, we have $f(v) ≠ f(v+1)$.
\end{proof}

In the following section, we give an implementation of a family
of threshold functions with these properties.

\section{Implementing an obfuscated threshold function}
\label{section-obfuscation}

Because we cannot restrict the adversary from observing the process's
programs, we cannot directly enforce that the adversary only accesses
$f$ as an oracle. So instead, we must rely on an obfuscated
implementation of $f$ that is both expensive to evaluate and reveals
no information about $f(v')$ for $v'=v$ when evaluating $v$ other
than that implied by the threshold-function property.

\subsection{Definition}
\label{section-obfuscation-construction}

We introduce a new approach for computationally obfuscating programs that use one-way function primitives. The program's creation uses randomisation that can be revealed by (tunable) inversion processing required to invert the one-way function. We construct the obfuscation by using a random oracle to hide a sequence of bits that encode the truth table of a threshold function. To obtain the output from the obfuscation, one must find the pre-image of a single hash. Thus it is imperative that the hardness of computing the pre-image be tunable such that it is feasible to find a single pre-image, yet infeasible for the computationally bounded adversary to enough pre-images to de-obfuscate the program. We achieve this tunability not by weakening the hash function, but by providing a prefix for the pre-image, such that the processes need only complete it. We begin with the following definitions.

\begin{definition}[Threshold Function]
    \label{definition-threshold-function}
    Let $T,\ell\in\mathbb{N}$ be two integers such that $0 < T < \ell $. Define function $f_T:[\ell] \rightarrow \{0,1\}$ as 
    $$f_T(i) = \begin{cases}0 & i < T \\1 &  i\geq T\end{cases}$$
    Then $f_T$ is a threshold function for threshold $T$. 
\end{definition}

\begin{definition}[Random Oracle Hash Function]
    \label{definition-hash-function}
    Let $m\in\mathbb{N}$ and let $H_m:\{0,1\}^m \rightarrow \{0,1\}^m$ be sampled uniformly at random from the set of all functions $\{0,1\}^m \to \{0,1\}^m$ (i.e., $H_m$ is a \emph{random oracle} / random function).
    Then for all distinct $x\neq x'\in\{0,1\}^m$ and all $y,y'\in\{0,1\}^m$,
    \begin{enumerate}
        \item $\Pr \left[H_m(x)=y\right] = 2^{-m}$
        \item $\Pr \left[H_m(x)=y \ \wedge\ H_m(x')=y'\right] = 2^{-2m}$
    \end{enumerate}
\end{definition}

\Cref{definition-hash-function} is adapted from the book on
cryptography by Katz and Lindell (\cite{KatzLindell}), Chapter 5.1.
Using brute-force search, one must try at most $2^m$ different inputs
to find a pre-image of a given output. The following corollary of
\Cref{definition-hash-function} stems from this fact and states that given a prefix of the pre-image, our search space decreases.

\begin{corollary}[Tunable pre-image runtime]
    \label{corollary-tunable-hardness}
    Let $H_m:\{0,1\}^m \to \{0,1\}^m$ be a random oracle (\Cref{definition-hash-function}).
    Fix $k<m$ and a prefix $p\in\{0,1\}^k$, and let $H_m(p\Vert r) = y$ be the output of $H_m$ for some extension $r$ of $P$. For any (possibly randomized) algorithm $\mathcal{X}$ that makes at most $q$ oracle queries to $H_m$ and outputs $z\in\{0,1\}^{m-k}$, it holds that
    $$ \Pr_{H_m,\mathcal{X}}[H_m(p\Vert z)=y] \leq \frac{q}{2^{m-k}} $$
\end{corollary}

\begin{definition}[Threshold obfuscation scheme]
    \label{definition-obfuscation-scheme}
    Let $\Pi=(Prep,Probe)$ be a pair of algorithms.
    If given a representation of a threshold function $f_T$ (\ref{definition-threshold-function}), $Prep$ produces output $O_T = Prep(f_T)$ such that, for any given input $x$ of $f_T$, $Probe(O_T,x)=f_T(x)$ holds with high probability, then $\Pi$ is a threshold obfuscation scheme.
\end{definition}

We move one to formulate our security definition. We base this definition of the adversarial experiment format used in the book by Katz $\&$ Lindell \cite{KatzLindell}. There exists a definition for obfuscation in  \cite{DBLP:journals/jacm/BarakGIRSVY12} (Boaz Barak et al.). This definition requires that no probabilistic polynomial-time adversary can generate the same outputs as the original program with high probability. While we would like to have such obfuscation for the threshold function (a topic for possible future work), our current obfuscation construction does not satisfy this very strong notion; rather, we require that to reproduce the same outputs, an adversary must have a runtime that is at least logarithmic in the size of the construction, and an adversary must reproduce the exact outputs with probability $1$; otherwise, it cannot do any better than a random guess. Specifically, since we tailor this obfuscation to a threshold function, we formulate this specific notion of obfuscation as an experiment where the adversary gets the obfuscated threshold function just like the processes, and must recover the threshold $T$ in order to succeed.

\begin{definition}[Obfuscation Security Experiment $Obf_{\mathcal{A},\Pi,\ell}$]
    \label{definition-obfuscation-experiment}
    Let $\ell\in\mathbb{N}$ be the range of values for threshold $T$.
    Let $\Pi=(Prep,Probe)$ be an obfuscation scheme (\ref{definition-obfuscation-scheme}).
    Let $\mathcal{T}_{Probe}$ be the runtime of algorithm $Probe$.
    Let $\mathcal{A}$ be an adversarial algorithm with runtime $\mathcal{T_A} < \mathcal{T}_{Probe} \cdot \log \ell$.

    \begin{enumerate}
        \item A random $T\in[\ell]$ is chosen.
        \item Compute $O_T = Prep(f_T)$.
        \item $\mathcal{A}$ is given $O_T$.
        \item $\mathcal{A}$ returns an integer $T^\prime$.
        \item Let $\tau$ be the number of times $\mathcal{A}$ computes $H_m$.
        \item The output of the experiment is $1,\tau$, if $T^\prime = T$, otherwise the output is $0,\tau$.
    \end{enumerate}
\end{definition}

Using \Cref{definition-obfuscation-experiment}, we formulate a definition of security for a threshold obfuscation scheme. While Katz\&Lindell would define the security as an upper bound on the probability of success in the experiment $Obf$, in our case, the adversary must be able to cause the protocol to terminate with certainty. We therefore require that the number of hashes the adversary must compute in order to find $T$ with certainty is bounded in expectation.

\begin{definition}
    \label{definition-obfuscation-secure}
    A threshold obfuscation scheme (\Cref{definition-obfuscation-scheme}) $\Pi=(Prep,Probe)$ is secure if for every adversary algorithm $\mathcal{A}$, it holds that
    $$ \mathbb{E} \left[ \tau \mid b = 1\right] \leq  \frac{1}{2} \left( (2^{m-k}+1) \log(\ell+1) \right)$$
    where $b,\tau = Obf_{\mathcal{A},\Pi,\ell}$.
\end{definition}

While a bound on the probability of success would be stronger, this definition is sufficient to show that the adversary must do more work than the processes in order to cause the protocol to fail. We move on to define our threshold obfuscation scheme. We begin with the $Prep$ algorithm, the output of which will be used to compute the output of the threshold function in the $Probe$ algorithm. We call this algorithm preprocess.\\

\begin{AlgDescBox}{preprocess}
    \begin{enumerate}
        \item Fix integers $1 < k < m$ and $\ell \geq 2$.
        \item Choose threshold $1 < T  < \ell$ uniformly at random.
        \item For all $1 \leq i \leq \ell$, sample binary strings $P_i\leftarrow\{0,1\}^k$ and $r_i\leftarrow\{0,1\}^{m-k}$ uniformly at random.
        \item For all $1 \leq i \leq \ell$
            \begin{enumerate}
                \item compute $d = H(P_i\Vert r_i)$
                \item set $C_i \gets d_1 d_2 \dots d_{m-1}$ that is, $C_i$ is the first $m-1$ bits of $D$
                \item set $V_i \gets d_m \oplus f_T(i)$
            \end{enumerate}
        \item Return $(P_1,\dots,P_\ell, C_1,\dots,C_\ell,v_1,\dots,v_\ell)$.
    \end{enumerate}
\end{AlgDescBox}

\noindent
Next, we define the $Probe$ algorithm, which we will use to compute a single output of $f_T$. \\

\begin{AlgDescBox}{probe}
    \begin{enumerate}
        \item Assume access to $H_m$ as well as $C_1,\dots,C_\ell$, $P_1,\dots,P_\ell$, and $v_1,\dots,v_\ell$. 
        \item Let $1\leq i \leq \ell$ be an input of $f_T$.
        \item Find string $r \in \{0,1\}^{m-k}$ such that $[H_m(P_i\Vert r)]_j = [C_i]_j$ for all $j=1,\dots,m-1$.
        \item Return $[H_m(P_i\Vert r)]_m \oplus v_i$
    \end{enumerate}
\end{AlgDescBox}

\noindent
A more complete definition and implementation can be found in \Cref{section-obfuscation-implementation}.

\subsection{Correctness $\&$ security}
\label{section-obfuscation-security-correct}

We prove that Algorithms~\ref{alg-preprocessing} and
\ref{alg-probeThreshold} (labeled as $\FuncSty{preprocess}$ and
$\FuncSty{probe}$) constitute a threshold obfuscation scheme (\ref{definition-obfuscation-scheme}), and also satisfy the security definition \ref{definition-obfuscation-secure}. In \Cref{claim-is-obf} we show the correctness of the construction, and in \Cref{claim-secure} we show that it is secure. We begin with the following lemma.

\begin{restatable}{lemma}{distinctPreImage}[Distinct pre-image with prefix]
    \label{lemma-distinct-pre-image}
    Let $k,m\in\mathbb{N}$ be two integers where $k<m$, and let $H_m$ be a random oracle \ref{definition-hash-function}.
    Let $p\in\{ 0,1\}^k, \quad r\in \{0,1\}^{m-k}$ be two binary strings. It holds that
    $$\Pr\left[ \exists_{x\neq r}( H_m(p||x) = H_m(p||r) ) \right] \leq 2^{-k} $$
\end{restatable}
\begin{proof}
    Let $p\in\{ 0,1\}^k, \quad r\in \{0,1\}^{m-k}$ be two binary strings.
    Choose uniformly at random $x\in \{0,1\}^{m-k} \setminus \{r\}$.
    By \Cref{definition-hash-function}, it holds that
    $$ \Pr[H_m(p\Vert r) = H_m(p \Vert x)] \leq 2^{-m} $$
    \\ \\ \\
    By the union bound
    $$ \Pr[\exists x(H_m(p\Vert r) = H_m(p \Vert x))] = \Pr [ \bigcup_{x \neq r}H_m(p\Vert r) = H_m(p \Vert x)]$$ 
    $$ \leq \sum_{x \neq r}\Pr [H_m(p\Vert r) = H_m(p \Vert x)] $$
    Since there are $2^{m-k} - 1$ possible choices for $x$ and assuming each sample of $x$ happens independently of the others,
    $$ = (2^{m-k}-1)\cdot \Pr [H_m(p\Vert r) = H_m(p \Vert x)] = (2^{m-k}-1) \cdot 2^{-m} \leq 2^{-k}  $$
    Hence, we obtain that 
    $$ \Pr[\exists x(H_m(p\Vert r) = H_m(p \Vert x))] \leq 2^{-k} $$
\end{proof}

The following is the construction correctness claim. We require that the parameters $k,\ell$ satisfy $\frac{\ell}{2^k} = negl(k)$, meaning that their ratio is negligible in $k$. This assumption is reasonable since increasing $k$ decreases the probability of a collision as well as reduces the amount of work required to find the suffix of a pre-image.

\begin{restatable}{claim}{isObf}
    \label{claim-is-obf}
    The algorithm pair $(\FuncSty{preprocess},\FuncSty{probe})$ is a threshold obfuscation scheme \ref{definition-obfuscation-scheme}, for $\ell,k$ that satisfy $\frac{\ell}{2^k} = negl(k)$.
\end{restatable}
\begin{proof}
    Let $\ell \in \mathbb{N}$ be some positive integer and let $f_T$ be some threshold function where $0 < T < \ell$.
    Fix some $m,k$ such that $0<k<m$.
    Let $C,P = \FuncSty{preprocess}(\ell,T,m,k)$ be the outputs of algorithm $\FuncSty{preprocess}$.
    For each possible input $i\in[\ell]$ of $f_T$, the algorithm executes the following steps:\\
    In lines \ref{line-preprocessing-assign-P}, \ref{line-preprocessing-assign-r}, the algorithm picks random strings $P[i] \in \{0,1\}^k$ and $r_i \in \{0,1\}^{m-k}$. Their concatenation $P[i]\Vert r_i$ is a string of length $m$ which is a valid input for $H_m$.\\
    In \ref{line-preprocessing-assign-D}, the algorithm computes the hash of $d \gets H_m(P[i]\Vert r_i)$.\\
    In \ref{line-preprocessing-assign-C}, the algorithm stores the first $m-1$ bits of $d$ in $C[i]$.\\
    In \ref{line-preprocessing-assign-V}, the algorithm stores $f_T(i)$ in $v_i$ by assigning $v_i \gets d_m \oplus f_T(i)$.
    
    Therefore, for any input $i$ of $f_T$, there exists at least one string $r_i\in\{0,1\}^{m-k}$ such that $H_m(P[i]\Vert r_i)[j]=C[i][j]$ for all $j=1,\dots,m-1$ and $H_m(P[i]\Vert r_i)[m] \oplus f_T(i) = V[i]$.\\

    Algorithm $\FuncSty{probe}$ has access to $\ell,m,k,C,P,V$ and is given input $i$ of the threshold function.
    The algorithm finds suffix $y$ such that $H_m(P[i]\Vert y)[1,\dots, m-1] = C[i][1,\dots,m-1]$.
    It remains to show that $H_m(P[i]\Vert y)[m] \oplus V[i] = f_T(i)$ holds with high probability.
    Assume the contrary.\\
    This implies $H_m(P[i]\Vert y)[m] \oplus H_m(P[i]\Vert r_i)[m] \oplus f_T(i) \neq f_T(i)$ which can only happen if $H_m(P[i] \Vert y)[m] \neq H_m(P[i] \Vert r_i)[m]$.
    Since $H_m$ is deterministic, this implies that $y \neq r_i$.
    But by \Cref{lemma-distinct-pre-image}, the probability that this occurs is at most $2^{-k}$. Thus, for a large enough $k$, it holds that $y = r_i$ with high probability.
    Thus, $H_m(P[i] \Vert y) = H_m(P[i]\Vert r_i)[m]$, which means that, w.h.p
    $$H_m(P[i] \Vert y) \oplus V[i] = H_m(P[i]\Vert r_i)[m] \oplus H_m(P[i]\Vert r_i)[m] \oplus f_T(i) = f_T(i)$$
\end{proof}

\begin{restatable}{lemma}{expectedProbes}
    \label{lemma-expected-probes}
    For any adversary $\mathcal{A}$ that recovers $T$ from $C,P,V$, it holds that
    $$\mathbb{E}[\tau] \geq \frac{1}{2}\left( (2^{m-k}+1) \cdot \log(\ell+1) \right)$$
    where $\tau$ is the number of times $\mathcal{A}$ computes $H_m$.
\end{restatable}
\begin{proof}
    By \Cref{definition-hash-function}, given $C,P,V$, the threshold $T$ that is hidden in $C,P,V$ can be any value in $\{1,\dots,\ell\}$ with equal probability.
    Moreover, \Cref{definition-hash-function} implies that for each $i\in[\ell]$, every suffix $y\in\{0,1\}^{m-k}$ has equal probability of satisfying $H_m(P[i]\Vert y)[1,\dots,m-1]=C[i]$.
    It follows that any adversarial algorithm that attempts to compute $T$ directly from $C,P,V$ must check $H_m(P[i] \Vert y)[1,\dots,m-1] = C[i]$ for some number of strings $y$.
    Order the set $\{0,1\}^{m-k}$ by some arbitrary order (by the number each string represents for example).
    Denote $X_{i,j} = H_m(P[i] \Vert y)$ where $y$ is the $j^{th}$ string in $\{0,1\}^{m-k}$. 
    Denote $r=2^{m-k}$.
    Let $M\in\{0,1, \bot,?\}^{\ell \times r}$ be the matrix such that 
    $$M_{i,j} = \begin{cases}
        1    & \mathcal{A} \text{ discovers that }  \quad X_{i,j} [1,\dots,m-1] = C[i] \wedge X_{i,j}[m] \oplus V[i] = 1 \\
        0    & \mathcal{A} \text{ discovers that }  \quad X_{i,j} [1,\dots,m-1] = C[i] \wedge X_{i,j}[m] \oplus V[i] = 0 \\
        \bot & \mathcal{A} \text{ discovers that }  \quad X_{i,j} [1,\dots,m-1] \neq C[i] \\
        ?    & \mathcal{A} \text{ does not know if the $j^{th}$ string statisfies the condition for } C[i],P[i]
    \end{cases}$$
    Notice that:
    \begin{itemize}
        \item [(a)] Since the adversary only needs to find one suffix $y$ per column $i$ that satisfies $X_{i,j} [1,\dots,m-1] = C[i]$, it holds that $M_{i,j} = 1 \vee 0 \implies \forall_{g\neq j} M_{i,g} = \bot$.
        \item [(b)] Since $C,P,V$ encode the truth table of a threshold function, the adversary can know that if $M_{i,j} = 0$ then for all $1\leq r \leq i$ it holds that $M_{r,j'} = 0$ for some $j'$, meaning $f_T(r)=0$. 
    \end{itemize}
    We call the process of checking if a single suffix $y$ satisfies $X_{i,j}[1,\dots,m-1]=C[i]$ a probe of $M$ (not to be confused with the $probe$ algorithm).
    Denote $M^{(t)}$ as the state of $M$ after the adversarial algorithm makes $t$ probes. Conversely, denote $M^{(0)}$ as the initial state of $M$ at the start of the execution, this implies $M^{(0)}_{i,j} = ?$ for all $i,j$. 
    Let $L_t$ be the number of columns in $M^{(t)}$ with at least one $?$ cell (conversely, without any $1$ or $0$ cells). 
    Let $P_t$ be the number of non-$?$ cells, in columns with at least one $?$ in $M^{(t)}$.
    Define potential function $\Phi$ as follows
    $$\Phi(L,P) = \frac{1}{2}\left( (r+1) \cdot \log(L+1) - P \right)$$
    We show that $\Phi$ is the lower bound on the expected number of probes $\mathcal{A}$ must execute to find $T$.
    
    \begin{claim}
        \label{claim-halving-searchspace}
        Denote $\text{hit}_t$ as the event where the $t^{th}$ probe $\mathcal{A}$ makes is at index $i,j$ and it holds that $X_{i,j} [1,\dots,m-1] = C[i] \wedge (X_{i,j}[m] \oplus V[i] = 1 \vee X_{i,j}[m] \oplus V[i] = 0)$.
        Conditioned on $\text{hit}_t$, it holds that $\log(L_{t+1}+1) \geq \log(L_t + 1)-1$, that is, a successful probe reduces the number ofpossibilities for $T$ by at most half.
    \end{claim}

    \textit{Proof.} Since $\mathcal{A}$ knows that $x=\max \{ i \mid f_T(i)=0 \} \leq T \leq  \min \{ i \mid f_T(i)=1 \}$, it follows that $T$ is uniform on $\{x,x+1,\dots,x + L_t\}$, it follows that
    $$\Pr[f_T(x+j)=0]=\Pr[T>x+j]=\frac{j}{L_t + 1},\qquad \Pr[f_T(x+j)=1]=\Pr[T\le x+j]=\frac{L_t + 1-j}{L_t + 1} $$
    If $f_T(x+j)=0$ then $T\in\{x,\dots,x+j-1\}$ and the new interval size is $j$.\\
    If $f_T(x+j)=1$ then $T\in\{x+j,\dots,x+L_t\}$ and the new interval size is $L_t + 1-j$.\\
    This implies that
    $$\mathbb{E}\left[\log(L_{t+1}+1) \mid \text{hit}_t\right]=\frac{j}{L_t + 1}\log j + \frac{L_t + 1-j}{L_t + 1}\log(L_t + 1-j)$$
    this expression is minizied at $j=\frac{L_t+1}{2}$, thus
    $$\mathbb{E}\left[\log(L_{t+1}+1) \mid \text{hit}_t\right] \geq \log((L_t + 1)/2) = \log(L_t + 1) - 1$$
    $$\mathbb{E}\left[\log(L_{t+1}+1) - \log(L_t + 1) \mid \text{hit}_t\right] \geq  -1 \Diamond$$

    Let $i$ be the index of the column where $\mathcal{A}$ makes the $t^{th}$ probe.
    Denote $q_t = |\{ j \mid M^{(t)}_{i,j} = \bot \}|$, that is, $q_t$ is the number of probes $\mathcal{A}$ on column $i$ that where {\bf not} a hit. 
    $$\Pr[\text{hit}_t] = \frac{1}{r-q_t}$$
    If $\text{hit}_t$ does not occur, then $P_{t+1} = P_t + 1$.\\
    If $\text{hit}_t$ does occur, then by (b), we have that $P_{t+1} \leq P_t + 1 - (q_t + 1) = P_t - q_t$.\\
    Denote $\Delta P = P_{t+1} - P_t$, it follows that
    $$ \mathbb{E}[\Delta P] \leq (1 - \Pr[\text{hit}_t]) + \Pr[\text{hit}_t]\cdot(-q_t) = 1 - \Pr[\text{hit}_t]\cdot(q_t + 1)$$
    $$ -\mathbb{E}[\Delta P] \geq \Pr[\text{hit}_t]\cdot(q_t + 1) - 1 \quad \text{(c)}$$ 
    Similarly, denote $\Delta \log (L + 1) = \log (L_{t+1} + 1) - \log (L_t + 1)$ and $\Delta\Phi = \Phi(L_{t+1},P_{t+1}) - \Phi(L_t,P_t)$. It follows that
    $$ \mathbb{E}[ \Delta\Phi ] = \frac{1}{2} \left( (+1) \mathbb{E}[\Delta \log(L+1)] - \mathbb{E}[\Delta P] \right) $$
    $$ = \frac{1}{2} \left( (+1) \Pr[\text{hit}_t] \cdot \mathbb{E}[\Delta \log(L+1) \mid \text{hit}_t] - \mathbb{E}[\Delta P] \right)$$
    By \Cref{claim-halving-searchspace} and (c),
    $$ \mathbb{E}[ \Delta\Phi ] \geq \frac{1}{2} \left( (+1) \cdot \Pr[\text{hit}_t] \cdot (-1) - 1 + \Pr[\text{hit}_t] \cdot (q_t + 1) \right) $$
    $$= \frac{1}{2} \cdot \left( -1 + \Pr[\text{hit}_t](q_t - r) \right)$$
    $$= \frac{1}{2} \cdot \left( -1 + \frac{1}{r-q_t}(q_t - r) \right)$$
    Thus 
    $$\mathbb{E}[\Delta \Phi] \geq  \frac{1}{2}(-1 + (-1))=-1 \quad \text{(d)}$$

    Having bounded the expected change in $\Phi$, we bound the expected number of steps $t$ that $\mathcal{A}$ will need to make in order to find $T$.
    We begin by showing that $\Phi + t$ is submartingale.
    Notice that when $L_t = 0$ then the adversary knows all outputs of $f_T(\cdot)$, thus we define stopping time as 
    $$ \tau = \min \{ t \mid L_t = 0 \}$$
    By definition, if $L_\tau = 0$ then there are no unknown columns at time $\tau$; thus, $P_\tau = 0$, which implies that
    $$ \Phi(L_\tau,P_\tau) = \Phi(0,0) = \frac{1}{2} \left((+1) \log (0 + 1) - 0\right) = 0 $$
    Let $\mathcal{F}_t = \sigma(M^{(t)})$ be the filtration generated by $M$ after $t$ probes.
    Denote $\Phi_t = \Phi(L_t,P_t)$.
    By (d), it holds that 
    $$\mathbb{E}[\Phi_{t+1} -\Phi_t \mid \mathcal{F}_t] \geq -1$$
    $$\implies \mathbb{E}[\Phi_{t+1} -\Phi_t + 1 \mid \mathcal{F}_t] \geq 0$$
    $$\implies \mathbb{E}[ (\Phi_{t+1} + t) - (\Phi_t + t) + 1 \mid \mathcal{F}_t] \geq 0$$
    $$\implies \mathbb{E}[ (\Phi_{t+1} + (t+1)) - (\Phi_t + t)\mid \mathcal{F}_t] \geq 0$$
    Since $\Phi_t$ is completely determined by $M^{(t)}$, $\mathbb{E}[\Phi_t\mid \mathcal{F}_t] =  \Phi_t$.
    This implies that
    $$\mathbb{E}[ (\Phi_{t+1} + (t+1)) - (\Phi_t + t)\mid \mathcal{F}_t] = \mathbb{E}[ (\Phi_{t+1} + (t+1)) \mid \mathcal{F}_t] - \mathbb{E}[(\Phi_t + t)\mid \mathcal{F}_t] \geq 0$$
    $$\implies \mathbb{E}[ (\Phi_{t+1} + (t+1)) \mid \mathcal{F}_t] \geq  \mathbb{E}[(\Phi_t + t)\mid \mathcal{F}_t] = \Phi_t + t \quad (e)$$
    Therefore, $\Phi_t + t$ is submartingale.
    Notice that for $\ell \cdot r$ probes, the adversary knows all columns of $M$ and the algorithm stops, thus it follows that $\tau \leq \ell \cdot r < \infty$ is bounded.
    Therefore, the optional stopping theorem applies to $\Phi_t + t$.
    By (e), and the optional stopping theorem
    $$\mathbb{E}[\Phi_\tau + \tau] \geq \Phi_0$$
    since $\Phi_\tau = 0$
    $$ \implies \mathbb{E}[\tau] \geq \Phi_0 = \frac{1}{2}\left( (2^{m-k}+1) \cdot \log(\ell+1) \right)$$
\end{proof}

The following is a security claim based on the established definition.

\begin{restatable}{claim}{secure}
    \label{claim-secure}
    The threshold function obfuscation $\Pi = \FuncSty{preprocess},\FuncSty{probe}$ is secure \ref{definition-obfuscation-secure}.
\end{restatable}
\begin{proof}
    Let $\mathcal{A}$ be any adversarial algorithm that tries to recover $T$ from $C,P,V$.
    Define $\tau,b = Obf_{\mathcal{A},\Pi,\ell}$
    By \Cref{lemma-expected-probes}, if $\mathcal{A}$ finds $T$ and does $\tau$ probes to $H_m$, it holds that
    $$\mathbb{E}[\tau] \leq \frac{1}{2}\left((2^{m-k}+1)\log (\ell + 1)\right)$$
    therefore
    $$\mathbb{E}[\tau \mid b = 1] \leq \frac{1}{2}\left((2^{m-k}+1)\log (\ell + 1)\right)$$
\end{proof}

\subsection{Implementation}
\label{section-obfuscation-implementation}

Here we give a more detailed description of the obfuscated threshold
implementation outlined in
Section~\ref{section-obfuscation-construction}.

\subsubsection{Outline}
We present an implementation of the first algorithm of our threshold obfuscation scheme, Algorithm \ref{alg-preprocessing}. This algorithm accepts hash size parameter $m$, hardness parameter $k$, value range $\ell$, and input threshold $T$. It returns three arrays, $C$, $P$ and $V$.
The idea is to hide the threshold $T$ in an array of bits $V=v_1,\dots,v_\ell$. 
We create an array of hashes $C=C_1,\dots,C_\ell$  where each hash is created by randomly choosing two strings $p_i\in\{0,1\}^k$ and $r_i\in\{0,1\}^{m-k}$ for each $i=1,\dots, \ell$ and hashing their concatenation $p_i \Vert r_i$. We store in $C_i$ the first $m-1$ bits of the hash $H_m(p_i \Vert r_i)$ and the last bit we XOR with the value of $f_T(i)$ to obscure it. We store this in the bit $v_i \gets [H_m(p_i \Vert r_i)]_m \oplus f_T(i) $.
In this way, in order to compute $f_T(i)$, one must find a string $y$ such that $[H_m(p_i \Vert y)]_{1\dots (m-1)} = C_i$, and then then compute $[H_m(p_i \Vert y)]_m \oplus v_i$. Given that there is a high probability that $[H_m(p_i \Vert y)]_{1,\dots,(m-1)}=[H_m(p_i \Vert r_i)]_{1,\dots,(m-1)} \implies [H_m(p_i \Vert y)]_m=[H_m(p_i \Vert r_i)]_m$ it follows that, $[H_m(p_i \Vert y)]_m \oplus v_i=[H_m(p_i \Vert y)]_m \oplus [H_m(p_i \Vert r_i)]_m \oplus f_T(i)=f_T(i)$ holds  w.h.p.
To create the hashes, we employ our random oracle function (\ref{definition-hash-function}). Since one needs to find a suffix for a pre-image of the hash (up to the last bit) in order find $f_T(i)$, we must ensure that finding such suffix is feasible. To control the size of the search space we have a security parameter $k$ which determines the size of the prefix string $p_i$. For this reason, the preprocessing algorithm returns the array of prefix strings $P=P_1,\dots, P_\ell$. Increasing $k$ decreases the size of the suffix required to find, which automatically reduces the search space. By the properties of a random oracle (\ref{definition-hash-function}), reducing the size of the search space implies reducing the runtime required to find $r_i$.

In a practical implementation of the construction, the standard hash {\cite{pub2015secure}} SHA 512/256 can be used in place of $H_m$ (\Cref{definition-hash-function}). While no formal proof exists, it is widely accepted that in practice, SHA behaves like a random oracle, including an avalanche effect, where any minor change in the input, even a single bit, drastically alters the output.

\begin{algorithm}[H]
    \Procedure{$\FuncSty{preprocess}(\ell, T, m, k $ s.t. $ t \leq \ell \wedge k < m)$}{
        \tcp{Initialize array of hashes and array of nonces}
        \label{line-preprocessing-declare-C}
        $C[1,...,\ell][1,...m-1]$ an uninitizalized array of bit strings\;
        \label{line-preprocessing-declare-P}
        $P[1,...,\ell][1,...k]$ an uninitizalized array of bit strings\;
        \label{line-preprocessing-declare-V}
        $V[1,...,\ell]$ an uninitizalized array of bits\;
        \label{line-preprocessing-for-loop}
        \For{$i ← 1\dots \ell$}{
            \label{line-preprocessing-assign-P}
            $P[i] \gets$ random bit string of length $k$\;
            \label{line-preprocessing-assign-r}
            $r_i \gets$ random bit string of length $m-k$\;
            \label{line-preprocessing-assign-D}
            $d \gets H_m(P[i] \Vert r_i$) \;
            \label{line-preprocessing-assign-C}
            $C[i] \gets d_1d_2\dots d_{m-1} $ \;
            \label{line-preprocessing-assign-V}
            $V[i] \gets d_m \oplus f_T(i)$
            \label{line-preprocessing-for-loop-end}
        }
        \Return $C,P,V$\;
        \label{line-preprocessing-return}
    }
    \caption{Threshold Encapsulation}
    \label{alg-preprocessing}
\end{algorithm}

\subsubsection{Preprocessing Algorithm} 
Algorithm \ref{alg-preprocessing} is executed by the programmer to obtain an obfuscated protocol. The integer parameters $\ell$, $T$, $k$, and $m$ are declared (Line 1). Where $\ell$ is the span in which a threshold $t$ is defined, $m$ is the number of bits used by the one-way function, say 256 for SHA-256, and $k$ is the length of the (exposed) prefix nonce of the $m$ bits input (and output) of the SHA. 

Two arrays, $C$ (committed values) and $P$ (padding nonces), each of
$\ell$ entries, are defined in lines~\ref{line-preprocessing-declare-C}
and~\ref{line-preprocessing-declare-P}. Each entry of the $C$ array is
ready to be assigned with the SHA result
(line~\ref{line-preprocessing-assign-C}). Each entry of the
$P$ array exposes a $k$-bit nonce value
(line~\ref{line-preprocessing-assign-P}). We use the nonce to enforce the independence of pre-image searches and as randomness used to validate the pre-image. 

The for loop (lines \ref{line-preprocessing-for-loop} to
\ref{line-preprocessing-for-loop-end}) iterates over all possible
values of $i$ and computes the cryptographic hash (e.g., the common
one-way function such as SHA) over the nonce concatenated with $r_i$,
a concealed random string of length $m-k$. The computation starts
(line~\ref{line-preprocessing-assign-P}) with a selection of random bit string of size $k$ as the nonce for $i$ and entry and stores it
in $P[i]$. Next, the algorithm randomly selects a bit string $r_i$
(line~\ref{line-preprocessing-assign-r});
Then, the algorithm computes the hash with the concatenation of the two strings as input (line~\ref{line-preprocessing-assign-D}). 
Then stores the first $m-1$ bits of the hash in $C[i]$ (line~\ref{line-preprocessing-assign-C}), and stores the XOR of the last bit with $f_T(i)$ in $V[i]$ (line~\ref{line-preprocessing-assign-V}). 
By the time the algorithm reaches the return command (line~\ref{line-preprocessing-return}), it has computed the hashes for all the $\ell$ hash values of $f_T$, and all the nonces used to calculate them are stored in $P$; thus, the algorithm returns them to be embedded in the protocol program. 

\begin{algorithm}[H]
    \LocalData {
        integers $\ell,m,k$, precomputed array of hashes $C[1,...,\ell][1,...,m-1]$, prefixes $P[1,...,\ell][1,...,k]$, and bits $V[1,\dots,\ell]$
    }
    \Procedure{$\FuncSty{probe}(i\in [\ell])$}{
        $Y [1,...,m-k]$ an uninitialized bit string\;
        $D [1,...,m]$ an uninitialized bit string\;
        \label{line-thresholdReveal-find-suffix}
        $Y \gets $ string that satisfies $C[i][1,\dots,m-1] = H_m(P[i] \Vert Y)$\;
        \label{line-thresholdReveal-assign-D}
        $D \gets H_m(P[i] \Vert Y)$\;
        \label{line-thresholdReveal-return}
        \Return $D[m] \oplus V[i]$\;
    }
    \caption{Threshold Probe}
    \label{alg-probeThreshold}
\end{algorithm}

\subsubsection{Probing algorithm}
The processes use the $\FuncSty{probe}$ procedure described
in Algorithm~\ref{alg-probeThreshold} to compute $f_T(i)$. 

The algorithm works by finding the pre-image $P[i] \Vert r_i$ that was
used to compute $C[i]$ in the preprocessing phase. It begins by
initializing a bit string $Y$ of length $m-k$ to hold potential
suffixes to pre-image $P[i]$ of $C[i]$, and an $m$-length bit string $D$ to hold the hash that is equal to $C[i]$ in the first $m-1$ bits. Next, it finds and stores in $Y$ a
suffix for the pre-image of $C[i]$ such that the first $m-1$ bits of the hash of $P[i]\Vert Y$
are equal to $C[i]$ (line~\ref{line-thresholdReveal-find-suffix}).
It stores this hash in $D$ (line~\ref{line-thresholdReveal-assign-D}).
This means
that, with high probability $D[m] = H_m(P[i] \Vert r_i)[m]$. The exact method of finding this pre-image has not
been specified, but exhaustive search mining or any other method to
find a pre-image of a cryptographic hash function can be employed.
Thus, Algorithm~\ref{alg-probeThreshold} returns $f_T(i)$ by computing
$D[m] \oplus V[i]$ (line~\ref{line-thresholdReveal-return}), since the preprocessing algorithm defined $V[i]= H_m(P[i]\Vert r_i)[m] \oplus f_T(i)$, the algorithm returns $f_T(i)$.

\section{Conclusion}
\label{section-conclusion}

Designing an algorithm to cope with an adversarial scheduler (and/or inputs) ensures the algorithm's function in rare scenarios where the schedule is the most unfortunate for the algorithm. In many cases, the worst case leads to an impossibility result, while the rare scenario does not happen in practice. This is particularly true in scenarios that require ongoing tracing and computation of the imaginary scheduler entity.

We examine the power of using program obfuscations and random oracles
to derandomize distributed asynchronous consensus algorithms. The
power of the random oracle can be demonstrated in algorithms where
symmetry is broken by other means, such as process identifiers and/or
symmetry-breaking operations, such as compare-and-swap. 
Thus, randomization has an inherent more substantial power in breaking symmetry. On the other hand, harvesting proper randomization during runtime is challenging, and it should be avoided if possible.  

A weakness of our construction is that it requires substantial
preprocessing to construct an obfuscated threshold function since our
implementation is just a truth table hidden behind time-lock puzzles. A natural question is
whether a more sophisticated obfuscation procedure could reduce this
setup, perhaps by constructing a function $f(s,i)$ where the threshold $T$
is a hash of some shared data $s$ but the implementation of $f$
prevents recovering $T$ more efficiently than simply doing binary
search.

However, even with these limitations,
we believe that we have demonstrated integrating program obfuscation and random oracle abstractions and functionalities in distributed computing is helpful for the theory and practice of distributed computing and systems.

\bibliographystyle{plain}
\bibliography{ro4ac}

\end{document}